\documentclass[11pt]{article}
\usepackage{amsfonts}
\usepackage{amsmath,amsthm,mathrsfs}
\usepackage{hyperref}

\usepackage[all]{xy}
\usepackage{graphicx}

\usepackage{amssymb}
\usepackage{latexsym}
\usepackage{slashed}

\DeclareMathAlphabet\EuFrak{U}{euf}{m}{n}	%  the Bold Euler Fraktur
\SetMathAlphabet\EuFrak{bold}{U}{euf}{b}{n}	%  gothic font

\parskip2pt
\mathsurround1pt

\oddsidemargin14pt
\textwidth430pt
\textheight590pt

% Alcune sequenze molto usate:

\newcommand{\lb}{{\left\langle \right.}}
\newcommand{\rb}{{\left. \right\rangle}}

\newcommand{\ovl}{\overline}

%\newcommand{\hhol}{\circ}
%\newcommand{\hhol}{\circlearrowleft}

% mathbb e C*

\newcommand{\bC} {{\mathbb C}}
\newcommand{\bF} {{\mathbb F}}

\newcommand{\bI} {{\mathbb I}}
\newcommand{\bR} {{\mathbb R}}

\newcommand{\bU} {{\mathbb U}}

\newcommand{\bGL}{{\mathbb{GL}}}
\newcommand{\bZ} {{\mathbb Z}}

\newcommand{\bN} {{\mathbb N}}

\newcommand{\supp}{{\mathrm{supp}}}

% Lettere greche

% caratteri math bold italic:

% caratteri calligrafici:

\newcommand{\mB}{\mathcal B}

\newcommand{\mF}{\mathcal F}

\newcommand{\mH}{\mathcal H}

\newcommand{\mP}{\mathcal P}

\newcommand{\mR}{\mathcal R}
\newcommand{\mS}{\mathcal S}

\newcommand{\mU}{\mathcal U}

% Caratteri gotici:

\newcommand{\efg}{\EuFrak g}

\newcommand{\efu}{\EuFrak u}

% Caratteri sc:

% Simboli particolari

%\newcommand{\xP}{\mathrm{P}}
\newcommand{\xP}{\mP}
\newcommand{\ad}{{\boldsymbol{ad}}}

%\newcommand{\hhol}{{\mathrm{hol}}}
%\newcommand{\Nat}{{\mathrm{Nat}}}

% ambienti newtheorem
\newtheorem{thm}{Theorem}[section]

\newtheorem{lem}[thm]{Lemma}
\newtheorem{prop}[thm]{Proposition}

\theoremstyle{definition}

\theoremstyle{remark}

\numberwithin{equation}{section}

%***************************************************************************

\begin{document}

\author{{\sf Ezio Vasselli}
\\{\small{Dipartimento di Matematica,}}
\\{\small{Universit\`a di Roma ``Tor Vergata'',}}
\\{\small{Via della Ricerca Scientifica 1, I-00133 Roma, Italy.}}
\\{\sf ezio.vasselli@gmail.com}
}

\title{Background potentials and superselection sectors}
\maketitle

\begin{abstract}
The basic aspects of the Aharonov-Bohm effect can be summarized by the remark that wavefunctions become 
sections of a line bundle with a flat connection (that is, a "flat potential"). 
Passing at the level of quantum field theory in curved spacetimes, 
we study the Dirac field interacting with a classical (background) flat potential and show that it can be interpreted 
as a topological sector of the observable net of free Dirac field. 
On the converse, starting from a topological sector we reconstruct a classical flat potential, 
interpreted as an interaction of a Dirac field.
This leads to a description of Aharonov-Bohm-type effects in terms of localized observables.

\medskip

\noindent Keywords: Aharonov-Bohm effect, superselection sector, Dirac field, curved spacetime.
\end{abstract}

\section{Introduction}
\label{sec.intro}

Despite the Aharonov-Bohm effect is a sixty-year old result
and appears in undergraduate textbooks as a consolidate topic, it still attracts a considerable attention.
The well-known scenario is that of a double-slit experiment, enriched by a solenoid directed towards the vertical axis and 
carrying a magnetic field $\vec{B}$ (\emph{magnetic} Aharonov-Bohm effect).
The solenoid is shielded, so that:
(1) the charged quantum particles passing through the slits are confined in a region $N \subset \bR^3$ where the magnetic field $\vec{B}$ vanishes;
(2) $N$ is multiply connected (it has non-trivial fundamental group $\pi_1(N) \simeq \bZ$).
The experimental evidence is that if $\gamma_1 , \gamma_2$ are paths in $N$ ("trajectories"), then the shift phase
\begin{equation}
\label{eq.1}
\exp \oint_\ell \vec{A}  \ \ \ , \ \ \ \ell := \ovl{\gamma}_2 * \gamma_1 \, ,
\end{equation}
appears in coherent superpositions of wavefunctions with total support homotopic to the loop $\ell$.
Here $\vec{A} : N \times \bR \to \bR^3$ is the (possibly time-dependent) vector potential and $d\vec{A}=\vec{B}=0$;
thus the point is that states are affected by a quantity, $\vec{A}$,
which classically does not have a well-defined physical meaning,
whilst the physical quantity, $\vec{B}$, vanishes.
An elegant way to express this scenario is to regard wavefunctions as sections 
\begin{equation}
\label{eq.2}
f : N \to L_A \, ,
\end{equation}
where $L_A \to N$ is the flat line bundle with monodromy (\ref{eq.1}) \cite{BM};
therefore we can regard $f$ as a function $f|_o : o \to \bC$ only on simply connected regions $o \subset N$.

As argued in Feynman's lecture notes \cite[\S 15-5]{Fey}, the Aharonov-Bohm effect may be regarded as a manifestation
of the fact that the Schroedinger equation contains $\vec{A}$ as an interaction term.
Thus, passing to a relativistic setting, a natural analogue is to consider a spacetime $M$ with $\pi_1(M)$ non-trivial,
and to study a (quantum) Dirac field $\psi_A$ interacting with a background (classical, electromagnetic) flat potential $A$;
by \emph{flat}, we mean that the curvature $F$ vanishes. 
We prove that $\psi_A$ can be easily constructed, Theorem \ref{thm.B1},
at the price of twisting the Dirac bundle: the twist is the relativistic counterpart of (\ref{eq.2}).

The purpose of the present paper is to describe $\psi_A$ in terms of observable quantities
following the ideas of algebraic quantum field theory \cite{Haa}: namely, that the physics of a quantum system is determined
by the observables organized in a net of local algebras, and that the observables are local and obey to Einstein's causality principle.
We give simplified versions of results that will appear in full generality in a forthcoming paper
by C. Dappiaggi, G. Ruzzi and the author; thus we focus on the basic ideas rather than insist on technical details.

%Relativistic setting: "charges without charge" (Wheeler, Isham, Ashtekar). 

The following sections are organized as follows.

In \S \ref{sec.A} we illustrate a result that allows to reconstruct a gauge potential
starting from a generalized Aharonov-Bohm phase
$\sigma : \pi_1(M) \to G$,
where $G$ is a compact Lie group \cite{Bar91}. The physical interpretation is that starting from $\sigma$,
which yields the effective observable, one is able to reconstruct a potential whose path-ordered integral is the phase itself.

In \S \ref{sec.B} we construct the field $\psi_A$, defined on a globally hyperbolic spacetime $M$.
We show that $\psi_A$ is equivalent to a family of fields $\psi_o$ defined on arcwise and simply connected regions $o \subset M$
(diamonds), all gauge-equivalent to a free Dirac field $\psi$ (Theorem \ref{thm.B1}).
Thus one cannot discriminate $\psi_A$ from $\psi$ in simply connected regions at the level of observables.

In \S \ref{sec.C} we show that any field of the type $\psi_A$ can be interpreted as a topological superselection sector
of the observable net defined by the free field $\psi$, Theorem \ref{thm.C1}.
This yields the expected physical interpretation of the superselection sectors introduced years ago by Brunetti and Ruzzi \cite{BR08}.
The basic idea behind this result is that background flat potentials or, to be precise, their Aharonov-Bohm phases,
yield the parallel transport over loops of charges in the observable net defined by $\psi$.

In \S \ref{sec.D} we give a sketch of the construction involving non-Abelian Aharonov-Bohm phases,
which typically appear in the case of a Dirac field carrying a symmetry by a non-Abelian global gauge group.

Our conclusions are given in the final \S \ref{sec.E}:
we discuss the physical interpretation of characterizing a background flat potential as a superselection sector,
and briefly sketch future developments ($\pi_1(M)$ non-Abelian and non-flat background potentials).

%%%%%%%%%%%%%%%%%%%%%%%%%%%%%%%%%%%%%%%%%%%

\section{Aharonov-Bohm effect: geometric aspects}
\label{sec.A}

Geometric properties of the spacetime manifold manifest themselves in the Aharonov-Bohm effect by means of 
the classical gauge potential interacting with the quantum particle.
Thus it is convenient to recall some constructions in differential geometry, commonly used in classical gauge theories,
that play a role in the present paper.

We start with some conventions.
For any vector bundle $B \to M$, we denote the space of sections by $\mS(B)$,
the space of compactly supported sections by $\mS_c(B)$,
and the space of sections supported in $o \subset M$ by $\mS_o(B)$;
in the sequel, sections shall be assumed to be smooth.
For example, $k$--forms are sections $A \in \mS(\wedge^k T^*M)$, where $T^*M$ is the cotangent bundle,
and in particular for \emph{closed} $k$--forms ($dA=0$) we write $A \in Z_{dR}^k(M)$.
Finally, when $M$ has a spin structure, spinors are sections $f \in \mS(DM)$ of the Dirac bundle $DM$.
%
% \gamma matrices and \slashed notation

%A standard way to construct bundles on $M$ is the following.
%Let $\Pi$ be a group and $Y,V$ spaces carrying a $\Pi$-action.
%In particular, we write $\rho$ for the $\Pi$-action on $V$.
%In the sequel,
%%
%$V$ may be a vector space and $\rho$ a representation of $\Pi$ on $V$,
%or we may have $V = \Pi$ carrying the right translation action $\rho(p)p' := p'p$, $p,p' \in \Pi$.
%%
%We define the $\Pi$-product $Y \times_\rho V$ as the quotient of $Y \times V$ by the equivalence
%relation $(y,v) \sim (py,\rho(p)v)$, $\forall p \in \Pi$.
%We denote elements of $Y \times_\rho V$ by $(y,v)_\rho$.
%If there is a projection $Y \to M$, $y \mapsto y^\downarrow$, such that $(py)^\downarrow = y^\downarrow$
%for all $p \in \Pi$, then we have a well-defined surjective map
%%
%\begin{equation}
%\label{eq.A1}
%\eta : Y \times_\rho V \to M \ \ \ , \ \ \ \eta(y,v)_\rho := y^\downarrow \, ,
%\end{equation}
%%
%making $B := Y \times_\rho V$ a bundle over $M$.
%The fibres of $B$ are isomorphic to $V$ whenever the $\Pi$-action is free and transitive on each preimage 
%%
%$Y_x := \{ y : y^\downarrow = x \}$,
%%
%and this shall be always true in the cases studied in the present paper.
%In particular, when $V$ is a vector space, $B$ is a vector bundle.
%%
%Two well-known examples of the above constructions are explained in the following lines.
%
%\medskip

Let $G$ be a Lie group with Lie algebra $\efg$ and $\ad : G \to \bGL(\efg)$ denote the adjoint action.
Given a morphism $\sigma : \pi_1(M) \to G$, one can construct a principal $G$-bundle $P_\sigma \to M$ which is \emph{flat}
in the sense that it admits a set of locally constant transition maps \cite[\S I.2]{Kob}.
In turn, $P_\sigma$ defines the $\efg$-bundle
$\efg_\sigma \to M$, $\efg_\sigma := P_\sigma \times_\ad \efg$
{\footnote{Here $P_\sigma \times_\ad \efg$ is the quotient of $P_\sigma \times \efg$ by the equivalence relation
           $(y,v) \sim (yg,\ad_g(v))$, $\forall g \in G$.}}.
By \cite[\S 2.4]{Bar91}, there is a flat connection on $P_\sigma$ with connection form 
$A \in \mS(T^*M \otimes \efg_\sigma)$,
such that
\begin{equation}
\label{eq.A4}
\sigma(\ell) \ = \ \xP\exp \oint_\ell A \, \in G 
\ \ \ , \ \ \ 
\forall [\ell] \in \pi_1(M) \, ,
\end{equation}
where $\xP\exp \oint$ is the path-ordered integral \cite[\S II.2]{BM}.
We call $A$ the \emph{flat gauge potential} defined by $\sigma$.
Using $A$ we can construct transition maps for $P_\sigma$, in the following way:
we consider a reference point $\bar{x} \in M$,
a base $\{ o \subset M \}$ of arcwise and simply connected regions, 
and pick $x_o \in o$ for any $o$;
then for any pair $y,y' \in M$ we fix a curve
$\gamma(y'y) : y \to y'$
and define the loops
\begin{equation}
\label{eq.A2a}
\ell(oa;x) \ := \ \gamma(\bar{x}x_o) * \gamma(x_o x) * \gamma(x x_a) * \gamma(x_a \bar{x})
\ \ \ , \ \ \ 
x \in o \cap a \neq \emptyset \, ; 
\end{equation}
finally we set
\begin{equation}
\label{eq.A2}
g_{oa}(x) \, := \, 
\sigma(\ell(oa;x)) \, = \, 
\xP\exp \oint_{\ell(oa;x)} A \, \in G \ \ \ , \ \ \ x \in o \cap a \, ,
\end{equation}
%
% Check whether P \exp \oint A gives \sigma
%
and, since $\sigma$ preserves loop compositions, we have the 1--cocycle relations
\begin{equation}
\label{eq.A3}
g_{oa} g_{ae} \ = \ g_{ae}  \ \ \ , \ \ \ \forall o \cap a \cap e \neq \emptyset \, .
\end{equation}
Since the homotopy class of $\ell(oa;x)$ is constant for $x$ varying in a connected component of $o \cap a$,
each $g_{oa}$ is locally constant, and in particular $g_{oa}$ is constant when $a \subseteq o$. 
For $e \subseteq a \subseteq o$, (\ref{eq.A3}) holds in terms of constant maps,
and this shall be useful in the context of superselection sectors \S \ref{sec.C}.

\begin{lem}
\label{lem.A1}
With the above notation, $g = \{ g_{oa} \}$ is a set of transition maps for $P_\sigma$, 
and $\ad g$ is a set of transition maps for $\efg_\sigma$.
\end{lem}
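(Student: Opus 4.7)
The plan is to build explicit local trivializations of $P_\sigma$ from parallel transport with respect to the flat connection $A$, and to identify the resulting cocycle with $\{g_{oa}\}$ via the holonomy formula (\ref{eq.A4}); the statement for $\efg_\sigma$ will then follow from the standard associated-bundle construction.

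First I would fix a reference point $p_{\bar x}$ in the fiber $P_\sigma|_{\bar x}$ and, for each region $o$, set $p_{x_o}$ to be the parallel transport of $p_{\bar x}$ along $\gamma(\bar x x_o)$. Assuming the family of reference paths $\gamma(x_o x)$ is chosen inside $o$ (as is implicit in the ``locally constant'' assertion following (\ref{eq.A3})), I would define a section $s_o : o \to P_\sigma$ by parallel-transporting $p_{x_o}$ to $x$ along $\gamma(x_o x)$. Flatness of $A$ together with the fact that $o$ is arcwise and simply connected make $s_o(x)$ independent of the path chosen within $o$, so $s_o$ is well defined and yields a trivialization $\phi_o : P_\sigma|_o \to o \times G$.

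I would then compute the transition between $\phi_o$ and $\phi_a$ at $x \in o \cap a$. By definition, this transition is the unique element $h_{oa}(x) \in G$ relating $s_o(x)$ to $s_a(x)$ under the right $G$-action, and by construction it is the holonomy of $A$ along the loop formed by $\gamma(\bar x x_o)$, a path in $o$ from $x_o$ to $x$, a path in $a$ from $x$ to $x_a$, and $\gamma(x_a \bar x)$. This loop is homotopic to $\ell(oa;x)$, so (\ref{eq.A4}) gives $h_{oa}(x) = \sigma(\ell(oa;x)) = g_{oa}(x)$, showing that $\{g_{oa}\}$ is a cocycle of transition maps for $P_\sigma$.

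For the associated bundle $\efg_\sigma = P_\sigma \times_\ad \efg$, the trivializations $\phi_o$ induce trivializations of $\efg_\sigma|_o$ whose transitions are obtained by applying the adjoint action of $g_{oa}$; this is the standard behavior of associated bundles under change of frame. The main obstacle is the path-bookkeeping in the second step: one must verify carefully that the loop naturally arising from comparing the parallel-transport trivializations is homotopic to $\ell(oa;x)$ in $\pi_1(M)$, which is straightforward once the reference paths are consistently organized (so that paths from $x_o$ to $x$ lie in $o$), reducing the comparison to homotopy invariance of $\sigma$.
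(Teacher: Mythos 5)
Your proof is correct, but it takes a genuinely different route from the paper's. The paper proves the lemma \emph{topologically}: it never touches the connection $A$. It uses the universal covering $\pi:\hat M\to M$, regarded as a $\pi_1(M)$-principal bundle, and cites Steenrod for the fact that $\hat M$ has transition maps $\hat g_{oa}(x,y)=(x,r_{\ell(oa;x)}(y))$. Since $P_\sigma$ is by definition the associated bundle $\hat M\times_\sigma G$, its transition maps are obtained by pushing $\hat g_{oa}$ through $\sigma$, and these are exactly $g_{oa}(x)=\sigma(\ell(oa;x))$. Your proof instead constructs local sections $s_o$ of $P_\sigma$ by parallel transport with respect to the flat connection $A$, uses flatness plus simple connectivity of $o$ to show the sections are well defined, and identifies the resulting cocycle with $\{g_{oa}\}$ through the holonomy formula (\ref{eq.A4}).

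Both are valid. The paper's argument is the more primitive one: it proceeds directly from the definition of $P_\sigma$ as an associated bundle of the covering, and does not rely on (\ref{eq.A4}) (which is itself a nontrivial input cited from Barrett) or on any differential-geometric structure. Your argument is more concrete and geometric --- it exhibits explicit trivializations and makes visible that the cocycle is literally the holonomy of $A$ around the comparison loops --- but it requires the existence of $A$ as a given, and it requires the extra bookkeeping you rightly flag at the end: the loop arising from comparing two parallel-transport sections at $x$ must be seen to coincide, up to homotopy, with $\ell(oa;x)$, which in turn hinges on the implicit convention that the connecting curves $\gamma(x_o x)$ are chosen inside $o$ (the same convention the paper uses tacitly when asserting that $\ell(oa;x)$ has locally constant homotopy class). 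If you make that convention explicit, your argument closes cleanly. A minor point: when you say the transition ``is the holonomy of $A$ along the loop formed by $\gamma(\bar x x_o)$, a path in $o$ \ldots'' you should be a bit more careful with the orientation conventions for $\gamma(y'y)$ and the order of composition, since getting $g_{oa}$ versus $g_{ao}$ depends on it; but this affects only notation, not correctness.
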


\begin{proof}
Let $\pi : \hat{M} \to M$ denote the universal covering. 
It is well-known that $\hat{M}$ can be regarded as a $\pi_1(M)$-principal bundle
carrying an action by homeomorphisms $r_\ell$, $[\ell] \in \pi_1(M)$,
which preserve the fibres $\hat{M}_x := \pi^{-1}(\{ x \})$, $x \in M$.
By \cite[\S 14.3]{Ste}, $\hat{M}$ has transition maps
$\hat{g}_{oa}(x,y) :=$ $( x \, , \, r_{\ell(oa;x)}(y) )$
defined for $(x,y) \in (o \cap a) \times F$, where $F := \hat{M}_{\bar{x}}$ has been chosen as a standard fibre.
By definition, $P_\sigma$ is the $G$-bundle with associated principal $\pi_1(M)$-bundle $\hat{M}$,
that is, the quotient $\hat{M} \times_\sigma G$ of $\hat{M} \times G$ by the equivalence relation
$(r_\ell(y),h) \sim (y, h \sigma(\ell))$, $[\ell] \in \pi_1(M)$,
\cite[\S I.2]{Kob}.
Thus $P_\sigma$ has transition maps induced by $\hat{g}_{oa}$,
%
%$(o \cap a) \times (F \rtimes_\sigma G) \simeq$ $(o \cap a) \times G$,
%
given by
$g_{oa}(x,h) :=$ $( x \, , \, h\sigma(\ell(oa;x)) )$, $(x,h) \in (o \cap a) \times G$,
as desired.
The statement for $\efg_\sigma$ is a trivial consequence of the first part of the proof.
\end{proof}

In the case $G = \bU(1)$ we have $\efg \simeq \bR$ and $\ad$ is trivial, so that $\efg_\sigma \simeq M \times \bR$.
Thus any morphism $\sigma : \pi_1(M) \to \bU(1)$ defines a 1--form $A \in \mS(T^*M)$ such that
\begin{equation}
\label{eq.A4ab}
\sigma(\ell) \ = \  \exp \oint_\ell A \, \in \bU(1) \ \ \ , \ \ \ \forall [\ell] \in \pi_1(M)
\end{equation}
(in the Abelian case, $\xP\exp \oint = \exp \oint$).
$A$ is now the vector potential, and (\ref{eq.A4ab}) represents the shift phase
carried by wavefunctions in the Aharonov-Bohm effect.
As remarked in \S \ref{sec.intro}, we describe this fact by regarding wavefunctions as sections of the line bundle
$L_A \to M$
defined by the transition maps (\ref{eq.A2}).
%%
%%This means that wavefunctions appear as maps $\phi : o \to \bC$ only in simply connected regions 
%%$o \subseteq M$, whilst in general the phase (\ref{eq.A4ab}) appears as a monodromy making manifest that
%%wavefunctions are actually sections of $L_A$.
%
%
%%The fact that $A$ is the connection form of a \emph{flat} connection implies 
%%%
%%$dA = \boldsymbol{e}^\sigma$,
%%%
%%where $\boldsymbol{e}^\sigma \in Z_{dR}^2(M)$ represents the real Euler class of $P_\sigma$
%%\cite[Ex.1.5]{CS85}.
%%Thus $\boldsymbol{e}^\sigma$ is exact, yet this does not necessarily imply that $P_\sigma$ is
%%trivial, that is, that there is an isomorphism $P_\sigma \simeq M \times \bU(1)$.

Now, in general it is not true that there is an isomorphism $L_A \simeq M \times \bC$ 
(when this happens, we say that $L_A$ is \emph{trivial}).
Indeed this turns out to be the case when, in particular,
$dA = 0$ 
{\footnote{
 In general, if $A$ is a flat gauge potential then $dA$ is a 2--form
 representing the real Euler class of $P_\sigma$,
 and $(2\pi)^{-1} \int_S dA$ $=$ $(2\pi)^{-1} \int_{\partial S} A \in \bZ$ for any 2--simplex $S$ \cite[Ex.1.5]{CS85}.
 Thus the previous integrals do not contribute to (\ref{eq.A4ab}), 
 as in the more familiar case $dA=0$.
 }: 
%
%\cite[\S 15-5]{Fey}.
%
in fact, in this hypothesis there are local primitives 
$\phi_o \in C^\infty(o)$, $d\phi_o = A_o$,
defined on any arcwise and simply connected $o \subset M$, 
so that each
\begin{equation}
\label{eq.A7}
\hat{A}_{oa} \, := \,  \phi_o - \phi_a
\end{equation}
is a constant function on connected components of $o \cap a$; an elementary computation then yields
\begin{equation}
\label{eq.A6}
g_{oa} \ = \  
\lambda_o \, e^{i\hat{A}_{oa}} \lambda_a^{-1} \ = \ 
\lambda_o e^{i\phi_o} \, e^{-i\phi_a} \lambda_a^{-1}
\ \ \ , \ \ \ 
\lambda_o \, := \, e^{ -i \left\{ \phi_o(x_o) + \int_{\gamma(\bar{x}x_o)} A \right\} }
\, ,
\end{equation}
implying that:
(1) $g$ is trivial as a $\bU(1)$-cocycle \cite[\S I.6]{BT};
(2) by rescaling the phases $\lambda_o$, we may take $g_{oa} = e^{i\hat{A}_{oa}}$.
Therefore we may find local charts $\pi_o : L_A |_o \to o \times \bC$ with 
$e^{i\hat{A}_{oa}} = \pi_o \pi_a^{-1}$, and we have the isomorphism
\begin{equation}
\label{eq.A5}
\vartheta : L_A \to M \times \bC 
\ \ \ : \ \ \ 
\vartheta|_o \, := \, e^{-i\phi_o} \pi_o \ , \ \forall o \, .
\end{equation}
%
%We stress that in general the shift phase (\ref{eq.A4ab}) does not vanish even if $L_A$ is trivial.
When $M$ is a spin manifold we set $D_AM := DM \otimes L_A$,
and call $D_AM$ \emph{the Dirac bundle twisted by $A$}.
If there is an isomorphism of the type (\ref{eq.A5}), then with an abuse of notation we write
\begin{equation}
\label{eq.A5a}
\pi_o : D_AM |_o \to DM|_o
\ \ \ , \ \ \  
\vartheta : D_AM \to DM
\, ,
\end{equation}
for the isomorphisms obtained by tensor product with the identity of $DM$.

%%%%%%%%%%%%%%%%%%%%%%%%%%%%%%%%%%%%%%%%%%%

\section{Dirac fields and background flat potentials}
\label{sec.B}

We now pass to the relativistic scenario and from now on assume that $M$
is a 4-dimensional connected globally hyperbolic spacetime \cite[\S 2]{GLRV01}.
In this case one can define the \emph{future} $J^+(X) \subset M$ and 
the \emph{past} $J^-(X) \subset M$ of a subset $X \subset M$.
We define the \emph{causal complement} $X^\perp := M \setminus cl(J(X))$,
where $cl$ is the closure and $J(X) := J^+(X) \cup J^-(X)$.
For $X,Y \subset M$, we write $X \perp Y$ whenever $Y \subseteq X^\perp$,
with the physical meaning that no event in $X$ can affect an event in $Y$ and \emph{vice versa}.

%In this case one can assign to any region $Y \subset M$ a \emph{past} $J^-(Y)$,
%a \emph{future} $J^+(Y)$, and the \emph{causal completion} $J(Y) := J^+(Y) \cup J^-(Y)$.
%Points $x,y \in M$ are said to be \emph{causally disjoint} whenever there are 
%neighbourhoods $U_x \ni x$, $U_y \ni y$ such that
%%
%$U_x \cap J(U_y) =$ $J(U_x) \cap U_y = \emptyset$. 
%%
%Regions $X,Y \subset M$ are said to be \emph{causally disjoint}
%whenever $x$ and $y$ are causally disjoint for all $x \in X$ and $y \in Y$.
%Then we write $X \perp Y$, and the physical meaning is that no event
%in $X$ can affect an event in $Y$, and vice versa.
%The \emph{causal complement} of $X$ is defined by
%$X^\perp := M - cl(J(X))$, where $cl$ stands for the closure.
%We say that $X$ is \emph{causally complete} whenever $X = X^{\perp\perp}$.

A distinguished class of subsets of $M$ is given by the \emph{diamonds}
which are, roughly speaking, the domains of dependence of 3--balls lying in Cauchy hypersurfaces of $M$ \cite{GLRV01,BR08}. 
We denote diamonds by lowercase italic letters $a,o,e, \ldots$.
Diamonds fulfill useful properties:
they are open, relatively compact, connected and simply connected,
causally complete ($a = a^{\perp\perp}$) and with connected causal complement $a^\perp$.
A \emph{path} is by definition a finite sequence of pairs of diamonds 
\[
p \, := \, (o_{n0} o_n) * \ldots * ( o_{k+1} o_{k0}) * (o_{k0} o_k) * (o_k o_{k1}) * \ldots * (o_1 o_{11}) \ ,
\]
such that $o_{k1} , o_{k0} = o_{k+1,1} \subseteq o_k$ for all indices $k=1,\ldots,n$;
then we write 
\[
p : e \to a \ \ \ , \ \ \ e := o_{11} \, , \, a := o_{n0} \, .
\]
A path $p$ is said to be an \emph{approximation}
of a curve $\gamma : [0,1] \to M$ whenever $\gamma = \gamma_n * \ldots * \gamma_1$ and,
for all $k=1,\ldots,n$:
(1) the curve $\gamma_k$ has image in $o_k \subset M$;
(2) $\gamma_k(0) \in o_{k1}$ and $\gamma_k(1) \in o_{k0}$.
In this case we write
$p = p(\gamma)$.
Any curve $\gamma$ admits an approximation $p(\gamma)$, and any path $p$ is the approximation of a curve $c(p)$:
if $p$ is of the type $p : a \to a$, then we may take 
$c(p) := \ldots * \ell(o_{k+1} o_{k0}) * \ell(o_{k0} o_k) * \ell(o_k o_{k1}) * \ldots$,
where the loops 
$\ell(o_{ki} o_k) \equiv$ $\ell(o_{ki} o_k;x)$, $x \in o_{ki} = o_{ki} \cap o_k$, $k=1,\ldots,n$, $i=0,1$,
are defined as in (\ref{eq.A2a}). By \cite{Ruz05}, we have the equality of homotopy classes
\begin{equation}
\label{eq.B7}
[\ell] \ = \ 
[c(p(\ell))] \ = \ 
\ldots * [\ell(o_{k+1} o_{k0})] * [\ell(o_{k0} o_k)] * [\ell(o_k o_{k1})] * \ldots \, .
\end{equation}

Now, in our hypothesis for $M$ it is proved that there is (at least) a spin structure \cite{Ish78},
thus we have a Dirac bundle $DM$
%
%and a Clifford bundle with fibre the 
%algebra generated by the Dirac matrices $\gamma^\mu$, $\mu=0,\ldots,3$.
%
%This allows to construct a spin connection $\nabla$ 
%
and a Dirac operator 
$i\slashed{\nabla} : \mS_c(DM) \to \mS_c(DM)$.
Using a Cauchy hypersurface of $M$ and the fundamental solution of the free Dirac equation 
$\{ i\slashed{\nabla} - m \} f = 0$, 
we get a scalar product $\lb \cdot \, , \, \cdot \rb$ on $\mS_c(DM)$,
and applying a quasi-free state we obtain a free Dirac field \cite{Dim82,Dap09},
\begin{equation}
\label{eq.B3}
\psi : \mS_c(DM) \to \mB(\mH) \ \ , \ \ \{ i\slashed \nabla - m \} \psi \ = \ 0
\, ,
\end{equation}
fulfilling, for all $f,f' \in \mS_c(DM)$, the CARs
\begin{equation}
\label{eq.B4}
\{ \psi(f)^* \, , \, \psi(f') \} \ = \ \lb f , f' \rb \, \bI
\ \ \ , \ \ \ 
\{ \psi(f) \, , \, \psi(f') \} \ = \ \{ \psi(f)^* \, , \, \psi(f')^* \} \ = \ 0 \, .
\end{equation}
Let now $A \in Z_{dR}^1(M)$. We look for a Dirac field $\psi_A$ solving the Dirac equation
with interaction term given by $A$. To this end, we note that for any diamond $o$ we have the local primitive
$\phi_o \in C^\infty(o)$, $d\phi_o = A_o$, thus defining
\begin{equation}
\label{eq.B5}
\psi_o : \mS_o(DM) \to \mB(\mH) \ \ \ , \ \ \ \psi_o(f) \, := \, \psi(e^{-i\phi_o}f)
\end{equation}
\cite[\S 4.2]{VasQFT}, and applying (\ref{eq.B3}), for all $f \in \mS_o(DM)$ we find
\begin{equation}
\label{eq.B6}
\begin{array}{lcl}
0 & = &
%\{ (i\slashed{\nabla} - m) \psi \}(e^{-i\phi_o}f) \ = \\ & = &
\psi( (i\slashed{\nabla} - m) (e^{-i\phi_o} f )) \ = \\ & = &
\psi( \slashed{A}(e^{-i\phi_o} f ) + ie^{-i\phi_o}\slashed{\nabla}f - me^{-i\phi_o}f ) \ = \\ & = &
\psi_o( (\slashed{A} + i\slashed{\nabla} - m) f ) \, .
% \ = \\ & = &
%\{ (\slashed{A} + i\slashed{\nabla} - m) \psi_o \}(f) \, .
\end{array} 
\end{equation}
We conclude that $\psi_o$ solves the interacting Dirac equation for spinors $f$ supported in $o$.
Now, by (\ref{eq.A7}) we have
\begin{equation}
\label{eq.B1}
\psi_{o'} \ = \ e^{-i\hat{A}_{o'o}} \psi_o \ \ \ , \ \ \ o \subseteq o' \, ,
\end{equation}
thus we find an obstruction to glue the fields $\psi_o$ and get the desired global solution $\psi_A$.
To solve the problem, we consider the twisted Dirac bundle $D_AM$ \S \ref{sec.A} and note that,
since $L_A$ has transition maps $e^{i\hat{A}_{o'o}}$, for any $o$ we have the "local charts"
$\pi_o : D_AM |_o \to DM|_o$
such that $\pi_{o'} \pi_o^{-1} f = e^{i\hat{A}_{o'o}}f$, $f \in \mS_o(DM)$.
If $\varsigma \in \mS_o(D_AM)$, then defining 
$\varsigma_o := \pi_o \varsigma$, $\varsigma_o \in \mS_o(DM)$,
we find that
\begin{equation}
\label{eq.B8}
\psi_o(\varsigma_o) \ = \ \psi(e^{-i\phi_o}\pi_o(\varsigma)) \, \stackrel{ (\ref{eq.A5}) }{=} \, \psi(\vartheta(\varsigma)) \, ,
\end{equation}
is independent of $o$. Thus defining $\psi_A(\varsigma) := \psi(\vartheta(\varsigma))$ we obtain 
a Dirac field
{\footnote{
$\psi_A$ shares with twisted fields in the sense of \cite{Ish78a} the property of being defined on a twisted bundle.
}}
\begin{equation}
\label{eq.B2}
\psi_A : \mS_c(D_AM) \to \mB(\mH)
\ \ , \ \ 
\{ i\slashed \nabla + \slashed A - m \} \psi_A \ = \ 0 \, .
\end{equation}
On the converse, suppose that a Dirac field (\ref{eq.B2}) is given.
Then computations analogous to (\ref{eq.B6}) and (\ref{eq.B8}) show that defining 
$\psi_o(f) := \psi_A(\pi_o^{-1}f)$, $f \in \mS_o(DM)$,
we get a family of fields such that:
(1) the relations (\ref{eq.B1}) hold;
(2) the field $\psi(f) := \psi_o(e^{i\phi_o}f)$, $f \in \mS_o(DM)$, fulfils the free Dirac equation and is independent of $o$.
In conclusion, we have proved:
\begin{thm}
\label{thm.B1}
Let $A \in Z_{dR}^1(M)$ and $D_AM$ denote the associated
twisted Dirac bundle. Then there is a field $\psi_A$ fulfilling (\ref{eq.B2}),
and this is equivalent to saying that there is a family of fields
$\psi_o : \mS_o(DM) \to \mB(\mH)$,
defined for any diamond $o \subset M$,
fulfilling (\ref{eq.B1}) and gauge-equivalent to a free Dirac field $\psi$.
\end{thm}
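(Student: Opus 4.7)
The plan is to work diamond-by-diamond using local primitives of the flat potential $A$, then glue the resulting local fields by passing to sections of the twisted Dirac bundle $D_A M$. Since every diamond $o$ is simply connected and $dA=0$, I pick a primitive $\phi_o \in C^\infty(o)$ with $d\phi_o = A|_o$ and set $\psi_o(f) := \psi(e^{-i\phi_o} f)$ for $f \in \mS_o(DM)$. A one-line computation as in (\ref{eq.B6}), using the Leibniz rule for $\slashed{\nabla}$ and the free Dirac equation for $\psi$, shows that $\psi_o$ solves the interacting Dirac equation on $o$. The CARs (\ref{eq.B4}) transfer to $\psi_o$ because fibrewise multiplication by $e^{-i\phi_o}$ is unitary and preserves the scalar product $\lb \cdot, \cdot \rb$ on $\mS_o(DM)$; this also makes $\psi_o$ gauge-equivalent to $\psi$ by construction.

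Next I would check the overlap identity (\ref{eq.B1}). For $o \subseteq o'$ both $\phi_o$ and $\phi_{o'}$ are primitives of $A|_o$, hence their difference is locally constant and equals $\hat{A}_{o'o}$ as in (\ref{eq.A7}); the relation (\ref{eq.B1}) then follows at once from the definition of $\psi_o$. This yields one half of the equivalence: the family $\{\psi_o\}$ exists, satisfies (\ref{eq.B1}) and is locally gauge-equivalent to $\psi$.

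The main obstacle is promoting the family $\{\psi_o\}$ to a single globally defined field $\psi_A$, because when $\pi_1(M)$ and the Aharonov-Bohm phase $\sigma([\ell]) = \exp\oint_\ell A$ are nontrivial the phases $e^{-i\phi_o}$ cannot be chosen coherently on all of $M$. This is exactly what the twisted bundle $D_A M$ of \S\ref{sec.A} resolves: its transition maps $e^{i\hat{A}_{o'o}}$ cancel the discrepancy in (\ref{eq.B1}), so that for $\varsigma \in \mS_o(D_A M)$ the value $\psi_o(\pi_o \varsigma)$ is independent of $o$, as in (\ref{eq.B8}). Because $A$ is closed, $L_A$ is trivial and (\ref{eq.A5}) supplies a global isomorphism $\vartheta : D_A M \to DM$; the desired field is then $\psi_A(\varsigma) := \psi(\vartheta(\varsigma))$, which satisfies (\ref{eq.B2}) since this can be checked diamond-locally using (\ref{eq.B6}).

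For the converse, starting from a field $\psi_A$ as in (\ref{eq.B2}) I would define $\psi_o(f) := \psi_A(\pi_o^{-1} f)$ and then $\psi(f) := \psi_o(e^{i\phi_o} f)$ for $f \in \mS_o(DM)$. Reversing the computation (\ref{eq.B6}) shows that $\psi$ obeys the free Dirac equation on each $o$, and the transition rule $\pi_{o'} \pi_o^{-1} = e^{i\hat{A}_{o'o}}$ combined with $\phi_{o'} - \phi_o = \hat{A}_{o'o}$ ensures that the local definitions agree on overlaps, so $\psi$ patches to a single free Dirac field on $\mS_c(DM)$; the relations (\ref{eq.B1}) and gauge-equivalence with $\psi$ are then automatic from the construction.
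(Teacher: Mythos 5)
Your proof tracks the paper's argument step by step: local primitives $\phi_o$ on diamonds, the definition $\psi_o(f) := \psi(e^{-i\phi_o}f)$, the Leibniz computation showing that $\psi_o$ solves the interacting equation on $o$, the overlap relation (\ref{eq.B1}) from $\hat{A}_{o'o} = \phi_{o'} - \phi_o$, the gluing through the local charts $\pi_o$ of $D_A M$, the observation that $\psi_o(\pi_o\varsigma)$ is $o$-independent and equals $\psi(\vartheta(\varsigma))$, and the identical converse via $\psi_o(f) := \psi_A(\pi_o^{-1}f)$, $\psi(f) := \psi_o(e^{i\phi_o}f)$. So this is essentially the paper's own proof.

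One side remark deserves a caveat: you assert that the CARs (\ref{eq.B4}) transfer to $\psi_o$ because fibrewise multiplication by $e^{-i\phi_o}$ is unitary and ``preserves the scalar product $\lb\cdot,\cdot\rb$ on $\mS_o(DM)$.'' That scalar product is not the pointwise fibre inner product; it is built from a Cauchy surface and the causal propagator of the \emph{free} Dirac operator, and multiplication by $e^{-i\phi_o}$ (defined only on $o$) does not commute with that propagator, whose range leaves $o$. So ``preserves the scalar product'' is not immediate. This does not affect the theorem, which only asserts that $\psi_o$ is gauge-equivalent to $\psi$ (a statement that holds by the very definition $\psi_o = \psi \circ e^{-i\phi_o}$) and makes no claim about the anticommutators of $\psi_o$; but as written, that sentence overstates what you have verified.
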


%%%%%%%%%%%%%%%%%%%%%%%%%%%%%%%%%%%%%%%%%%%

\section{Topological sectors}
\label{sec.C}

In the algebraic approach to quantum field theory the physics of a quantum system
is encoded by the family of Von Neumann algebras generated by observables localized in spacetime regions,
typically diamonds \cite{Haa}. 
As an intermediate step, we illustrate nets of the larger algebras generated
by not necessarily observable field operators.

\medskip

Let $\mH$ be a Hilbert space and $G \subset \mU(\mH)$ a group compact and metrizable
under the strong topology, such that there is $\delta \in G \cap G'$, $\delta^2 = \bI$
(here, for any subset $\mS$ of $\mB(\mH)$, $\mS'$ denotes the commutant).
Then $\delta$ splits $\mB(\mH)$ into \emph{bosonic} operators
$B \in \mB(\mH)$ such that $\delta B \delta = B$,
and \emph{fermionic} operators
$F \in \mB(\mH)$ such that $\delta F \delta = -F$.
A \emph{field net $\mF$ with gauge group $G$} is given by a collection of Von Neumann algebras
$\mF_o \subset \mB(\mH)$ assigned in correspondence with diamonds $o \subset M$, such that:
\begin{itemize}
\item $T \in \mF_o'$ for all $o$ implies $T = z\bI$ for some $z \in \bC$ (irreducibility);
\item $\mF_o \subseteq \mF_{o'}$ for $o \subseteq o'$ (isotony);
\item $\alpha_g(T) := gTg^* \in \mF_o$ for all $T \in \mF_o$ and for all diamonds $o$ (gauge action);
\item $[B,B'] = [B,F'] = [F,B'] = \{ F,F' \} = 0$, for all 
      operators $B$ (bosonic), $F$ (fermionic) in $\mF_o$,
      and $B'$ (bosonic), $F'$ (fermionic) in $\mF_e$, with $e \perp o$
      (normal commutation relations);
\item For any diamond $o$ and projection $E \in \mF_o$, there is an isometry $V \in \mF_o$ such that
      $V^*V = \bI$, $VV^* = E$ (Type ${\mathrm{III}}_1$ "Borchers property").
\end{itemize}
The most common way to construct a field net is to start from a Wightman field
$\phi$
and to define the Von Neumann algebras as the double commutants
{\footnote{For simplicity, here we bypass the technical complications that arise for $\phi(f)$ unbounded.}}
\[
\mF_o \, := \, \{ \phi(f) \, , \, \phi(f)^* \ : \ \supp(f) \subseteq o \}'' \, ,
\]
thus the former four properties are natural by keeping in mind Wightman's approach.
The type ${\mathrm{III}}_1$ property is a manifestation of positivity of energy and locality, 
and has been verified both in Minkowski spacetime (for the basic physical requirements \cite{BdAF87})
and in curved spacetimes (for the nets of the free Klein-Gordon and Dirac fields \cite{Ver97,dAH06}).

\medskip

We now consider the Hilbert space $\mH^0$ of $G$-invariant vectors and note that any 
$T \in \mF_o \cap G'$ restricts to the operator 
$\pi^0(T) := T | \mH^0 \in \mB(\mH^0)$.
We make the standard assumption that the so-obtained mapping $\pi^0$ is faithful
{\footnote{This property is trivially verified in the case of the free Dirac field in which we are interested.}}
and, to be concise, we write
$t := \pi^0(T)$, $\forall T \in \mF_o \cap G'$.
Thus we define the \emph{observable net} $\mR$, 
\[
\mR_o \, := \, \{ t \, : \, T \in \mF_o \cap G' \} \, \subseteq \mB(\mH^0) \ \ \ , \ \ \ \forall o \, ,
\]
for which irreducibility (in $\mB(\mH^0)$), isotony and ${\mathrm{III}}_1$ property are required.
The normal commutation relations imply $\mR_o \subseteq \mR_e'$, $o \perp e$,
and these relations can be strengthened by requiring \emph{Haag duality}
\begin{equation}
\label{eq.C8}
\mR_o \, = \, \cap_{e \perp o} \mR_e' \ .
\end{equation}
In the case of the free Dirac field (\ref{eq.B3})
we take $G \simeq \bU(1)$ given by the second quantization map $U_\zeta \in \mU(\mH)$, $\zeta \in \bU(1)$,
set $\delta = U_{-1}$, and define, for any diamond $o$,
\begin{equation}
\label{eq.C12}
\mF_o \, := \, \{ \psi(f) \, , \, \psi(f)^*  \ : \ \supp(f) \subseteq o \}'' \, ,
\end{equation}
with gauge action $\alpha_\zeta(T) := U_\zeta T U_\zeta^*$, $T \in \mF_o$.
Irreducibility, isotony and the ${\mathrm{III}}_1$ property hold for the associated observable net $\mR$ \cite{dAH06}.
Haag duality, notwithstanding it is widely believed to be true for $\mR$,
will not be used in the present paper.

\paragraph{Twisted field nets.}
Let $\mF$ be a field net with gauge group $G$. A \emph{twist} of $\mF$ induced by a morphism
$\sigma : \pi_1(M) \to G$
is given by the family of *-monomorphisms
\begin{equation}
\label{eq.C13}
\sigma_{o'o} \, := \, \alpha(g_{o'o}) : \mF_o \to \mF_{o'} 
\ \ \ , \ \ \ 
o \subseteq o'
\, ,
\end{equation}
where $g_{o'o} \in G$ are defined by (\ref{eq.A2})
{\footnote{Recall that $o = o \cap o'$ is simply connected so $g_{o'o}$ is a constant $G$-valued map.}},
\cite{VasQFT}.
%
%The idea of (\ref{eq.C13}) is that we encode the 
%geometry of the principal bundle $P_\sigma$ in the inclusion morphisms of $\mF$.
%
We denote the twisted net defined by (\ref{eq.C13}) by $\mF^\sigma$.
At the mathematical level, we note that $\mF^\sigma$ is a precosheaf.
This means that, to perform algebraic operations on operators of different algebras,
one must take account of (\ref{eq.C13}): for example, the product of operators
$T \in \mF_o$, $T' \in \mF_{o'}$, $o \subseteq o'$, is given by
\begin{equation}
\label{eq.C14}
\sigma_{o'o}(T) T' \, .
\end{equation}
By the results in \cite{BR08,RV11,RV14a,RV14b}, we may regard $\mF^\sigma$ as a representation of $\mF$ over the flat Hilbert
bundle $H \to M$ with fibre $\mH$ and monodromy $\sigma(\ell) \in G \subset \mU(\mH)$, $[\ell] \in \pi_1(M)$.

To understand why on earth one should twist a field net, we consider the case where $\mF$ is the net 
of the free Dirac field (\ref{eq.C12}). We note that by Theorem \ref{thm.B1} the interacting field (\ref{eq.B2})
is characterized by the family $\{ \psi_o \}$, which obviously generates the same net as $\psi$ since 
$\psi_o(f) = \psi(e^{-i\phi_o}f)$
for $\supp(f) \subseteq o$ and for all $o$.
Yet there is an additional information carried by the fields $\psi_o$ that does not appear in the Von Neumann algebras.
This is given by (\ref{eq.B1}), where the shift phase $e^{-i\hat{A}_{o'o}}$ appears when one compares $\psi_o$ with $\psi_{o'}$.
Extending this relation to the Von Neumann algebras generated by the $\psi_o$'s, we get the *-monomorphisms
\begin{equation}
\label{eq.C15}
\alpha(e^{-i\hat{A}_{o'o}}) : \mF_o \to \mF_{o'} \ \ \ , \ \ \ o \subseteq o' \, .
\end{equation}
Thus (the phases $\lambda_o$ being irrelevant) by (\ref{eq.A6}) we may regard (\ref{eq.C15}) as the twist induced by the morphism
$\sigma(\ell) := \exp - \oint_\ell A$,
and $\psi_A$ defines the twisted field net $\mF^\sigma$.
On the converse, if one starts from a twist $\mF^\sigma$ such that the gauge potential $A$ (\ref{eq.A4ab}) fulfils $dA=0$,
then by (\ref{eq.A7}) we have local primitives $\phi_o$ which can be used to define the family (\ref{eq.B5}) 
and therefore the field $\psi_A$.
In conclusion:
\begin{prop}
\label{prop.C1}
Let $\mF$ denote the net of the free Dirac field.
Then, any interacting Dirac field $\psi_A$, $dA=0$, defines a twist $\mF^\sigma$ with $\sigma(\ell) := \exp - \oint_\ell A$.
On the converse, any twist $\mF^\sigma$ such that the flat gauge potential $A$ (\ref{eq.A4ab}) fulfils $dA=0$
defines an interacting Dirac field $\psi_A$.
\end{prop}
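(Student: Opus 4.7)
The plan is to treat the two directions of Proposition \ref{prop.C1} in turn, each time pivoting on Theorem \ref{thm.B1} and the cocycle identity (\ref{eq.A6}) which links the local primitives $\phi_o$ to the transition maps $g_{oa}$.

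For the forward direction, I would start with an interacting Dirac field $\psi_A$ satisfying (\ref{eq.B2}) and invoke Theorem \ref{thm.B1} to obtain the equivalent family $\{\psi_o\}$ of free-field type fulfilling the gluing relations (\ref{eq.B1}). The Von Neumann algebras $\mF_o$ generated by the $\psi_o$ coincide with those generated by $\psi$ itself (as remarked after (\ref{eq.C14})), hence agree with the free-field net (\ref{eq.C12}). The content of (\ref{eq.B1}) is that passing from $\psi_o$ to $\psi_{o'}$ for $o\subseteq o'$ amounts to multiplication by the phase $e^{-i\hat A_{o'o}}$; conjugating by the associated gauge unitary and extending to the generated von Neumann algebras gives the *-monomorphism $\alpha(e^{-i\hat A_{o'o}}) : \mF_o \to \mF_{o'}$ of (\ref{eq.C15}). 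The final step is to identify this data with the twist (\ref{eq.C13}): by (\ref{eq.A6}) the cocycle $g_{o'o}$ defined from $\sigma(\ell):=\exp-\oint_\ell A$ differs from $e^{-i\hat A_{o'o}}$ only by the coboundary $\lambda_o\lambda_{o'}^{-1}$, and since coboundaries act trivially on the net (they can be absorbed into a unitary rescaling of each $\psi_o$ without changing the $\mF_o$ or the inclusions), the two twists coincide as precosheaves. Thus $\psi_A$ defines $\mF^\sigma$.

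For the converse, I would start from a twist $\mF^\sigma$ whose flat gauge potential $A$, defined by (\ref{eq.A4ab}), satisfies $dA=0$. Then (\ref{eq.A7}) produces local primitives $\phi_o\in C^\infty(o)$ on every diamond, and (\ref{eq.B5}) gives a candidate family $\psi_o(f):=\psi(e^{-i\phi_o}f)$. The computation (\ref{eq.B6}) shows each $\psi_o$ solves the interacting Dirac equation on $\mS_o(DM)$, and direct inspection gives (\ref{eq.B1}). At this point Theorem \ref{thm.B1} (used in the "converse" direction sketched just before it) assembles the $\psi_o$ into a global field $\psi_A$ defined on $\mS_c(D_AM)$ and satisfying (\ref{eq.B2}). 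A quick consistency check that the twist associated to this $\psi_A$ via the forward construction is $\mF^\sigma$ itself closes the loop, using again (\ref{eq.A6}).

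The only genuine subtlety I anticipate is the coboundary ambiguity $\lambda_o$: one must verify that replacing $e^{-i\hat A_{o'o}}$ by $\lambda_o^{-1} g_{o'o}\lambda_{o'}$ really gives an isomorphic precosheaf rather than merely an isomorphic net of algebras, which requires checking compatibility of the rescaling with the inclusions $o \subseteq o'\subseteq o''$ — this is automatic because $\{\lambda_o\}$ is a 0-cochain and $\alpha$ is a group homomorphism, but it is the one place where the bookkeeping between the pointwise data $(\phi_o,\lambda_o)$ and the cohomological data $\sigma$ must be made explicit. Everything else is a direct transcription of the constructions of \S\ref{sec.A} and \S\ref{sec.B}.
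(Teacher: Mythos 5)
Your proof follows the paper's argument essentially verbatim: the forward direction runs through Theorem~\ref{thm.B1}, the gluing relations~(\ref{eq.B1}), and the identification~(\ref{eq.A6}) of the resulting shift phases $e^{-i\hat A_{o'o}}$ with the cocycle $g_{o'o}$ up to the coboundary $\lambda$, while the converse uses~(\ref{eq.A7}) and~(\ref{eq.B5}) to reassemble $\psi_A$. The one place you expand is the coboundary ambiguity, which the paper dispatches with the parenthetical ``(the phases $\lambda_o$ being irrelevant)''; your explicit remark that the $\lambda_o$ can be absorbed by rescaling and yield isomorphic (rather than literally equal) precosheaves is a reasonable fleshing-out of that same point, not a different route.
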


The previous result illustrates the physical meaning of the operation of twisting a field net.
Motivated by that, and recalling (\ref{eq.A4ab}), we interpret $\sigma$ as an Aharonov-Bohm phase,
even for $dA \neq 0$ and, more in general, for $G$ non-Abelian.
In the next paragraph and in \S \ref{sec.D} we shall show how $\sigma$ is encoded in the observable net $\mR$.

\paragraph{Superselection sectors.} The belief of algebraic quantum field theory is that
the physics of a quantum system is encoded by $\mR$ (rather than $\mF$). This turns out to be true in 
Minkowski spacetime, where one is able to reconstruct $\mF$ starting from $\mR$ \cite{DR90}.
The information needed for the reconstruction of $\mF$ is given by the \emph{superselection structure},
that is the family of Hilbert space representations of $\mR$ that are physically relevant
according to a criterion that shall be explained below.

At the present time no analogue of the result \cite{DR90} is known in curved spacetimes and,
indeed, this question is a motivation for the present paper.
That we should be able to reconstruct the field net $\mF$ is expectable by the ordinary theory of
superselection sectors; the question is whether $\mR$ contains the information needed to reconstruct
twisted field nets, which encode interactions with background flat potentials. 
We are going to show that the answer is affirmative for the free Dirac field,
and the needed information is given by the superselection structure of $\mR$ in the sense of \cite{BR08}
that we illustrate in the following lines.

Superselection sectors in Minkowski spacetime are defined by suitable representations 
of the $C^*$-algebra $C^*\mR$ generated by $\cup_o \mR_o$.
In curved spacetimes, one defines for all diamonds $o$ "local sectors" 
\begin{equation}
\label{eq.C9}
\pi_o : \mR_o \to \mB(\mH^0)
\end{equation}
and then requires that they are unitarily equivalent
each other by means of a family of unitary operators
$z = \{ z_{o'o} \in \mU(\mR_{o'}) \}_{o \subseteq o'}$
\cite{GLRV01,Ruz05,BR08}.
In precise terms, we define sectors as pairs $(z,\pi)$, where $z$ fulfils
\begin{equation}
\label{eq.C1}
z_{o''o'} z_{o'o} \, = \, z_{o''o} \ \ \ , \ \ \ \forall o \subseteq o' \subseteq o'' \, .
\end{equation}
For convenience we set $z_{oo'} := z_{o'o}^*$ for $o \subseteq o'$,
and note that (\ref{eq.C1}) implies $z_{oo} \equiv \bI$. 
Given a path 
$p : a \to o$, $p = (oo_n) * \ldots * (o_1a)$,
we define the unitary operator
\begin{equation}
\label{eq.C3}
z_p \, := \, z_{oo_n} \cdots z_{o_1 a} \ .
\end{equation}
The symbol $\pi$ stands for the family of representations (\ref{eq.C9}), which fulfil
\begin{equation}
\label{eq.C2}
\pi_{o'}(t) \ = \ z_{o'o} \, \pi_o(t) \, z_{oo'}
\ \ \ , \ \ \ 
\forall o \subseteq o' \, , \, t \in \mR_o \subseteq \mR_{o'}
\, .
\end{equation}
We call $\pi$ the \emph{charge} defined by $(z,\pi)$.
Instead, $z$ is called the \emph{charge transporter} and, in mathematical terms,
it is interpreted as a parallel transport for charges localized in diamonds $o$.
Sectors are required to fulfil the following properties:
\begin{enumerate}
\item \emph{Net structure preservation}: 
      $\pi_o(\mR_o) \subseteq \mR_o$ for all $o$.  
      This allows to perform compositions of charges, defined by $\pi'_o \circ \pi_o$;
\item \emph{Localization}: 
      given a diamond $o$, for any $a \supset o$ we extend $\pi_o$ by defining the *-morphism
      $\pi^a_o : \mR_a \to \mR_a$, $\pi^a_o(t) := z_{oa} \pi_a(t) z_{ao}$, $\forall t \in \mR_a$.
      Note in fact that by (\ref{eq.C2}) we have $\pi^a_o(t) = \pi_o(t)$ for all $t \in \mR_o$.
      We then require that
      \[
      \pi^a_o(t') = t' \ \ \ , \ \ \ \forall t' \in \mR_e \subseteq \mR_a \ , \ e \subset a \ , \ e \perp o \ .
      \]
      This property is interpreted as the fact that $\pi_o$ is an excitation localized in $o$
      of the reference representation $\pi^0$, in the sense that it leaves invariant observables
      localized in the causal complement of $o$.
      This fits the classical notion of sector in the sense of Doplicher, Haag, Roberts \cite{Haa,DR90}.
\end{enumerate}
The above definition of sector is redundant when $\mR$ fulfils Haag duality. 
In fact, in this case we can recover 
\begin{equation}
\label{eq.C4}
\pi_o(t) \ = \ z_p \, t \, z_p^* \ \ \ , \ \ \ t \in \mR_o \, ,
\end{equation}
where $p : e \to o$ is an arbitrary path with $e \perp o$.
Using (\ref{eq.C4}) and Haag duality, it is obviously verified that (\ref{eq.C2}),
net structure preservation and localization are fulfilled,
thus the family $z$ is sufficient to define sectors, as done in \cite{BR08}.

\medskip

Let now $(z,\pi)$ be a sector. Then the mapping
\begin{equation}
\label{eq.C10}
\pi_1(M) \to \mU(\mH^0) \ \ \ , \ \ \ [\ell] \mapsto z_\ell := z_{p(\ell)} \, ,
\end{equation}
is well-defined as $z_\ell = z_{\ell'}$ for $\ell$ homotopic to $\ell'$,
and defines a unitary representation of $\pi_1(M)$;
moreover, there are $n \in \bN$, a unitary operator $U : \mH^0 \to \mH^0 \otimes \bC^n$
and a representation 
\begin{equation}
\label{eq.C11}
\sigma^z : \pi_1(M) \to \bU(n)
\end{equation}
such that
$z_\ell =$ $U^* \, ( \bI \otimes \sigma^z(\ell) ) \, U$, $\forall [\ell] \in \pi_1(M)$ \cite{BR08}.
The invariant $n$ is called the \emph{topological dimension} of $(z,\pi)$, and $\sigma^z$ the 
\emph{topological component}. We shall argue in \S \ref{sec.D} that $\sigma^z$ should take values
in the image of a representation of the gauge group $G$.
%
%We denote the set of sectors with topological dimension $n$ by ${\bf sect}^n(\mR)$.
%
A sector is said to be \emph{topological} whenever (\ref{eq.C11}) is not trivial,
and \emph{topologically trivial or DHR-sector} otherwise.
DHR-sectors correspond to representations of Fredenhagen's universal algebra $\vec{\mR}$ \cite{Fre90},
whilst topological sectors do not.
Indeed, topological sectors correspond to representations of $\mR$ on flat Hilbert bundles $H^0 \to M$ with fibre $\mH^0$
and monodromy (\ref{eq.C10}) \cite{BR08,RV11,RV14a,RV14b}.
By (\ref{eq.C11}) and (\ref{eq.A4}), any $(z,\pi)$ defines a flat gauge potential
$A^z \in \mS(T^*M \otimes \efu(n)_{\sigma^z})$ whose path-ordered integral is the topological component of $(z,\pi)$.
Examples of topological sectors for massive boson fields have been constructed in low dimension \cite{BFM09}
In the following we consider the field net of the free Dirac field in four spacetime dimensions and its sectors with topological dimension $1$, 
postponing comments on the general case in \S \ref{sec.D}.

\paragraph{Free Dirac field.}
We now consider the net (\ref{eq.C12}) having gauge group $G \simeq \bU(1)$.
By construction, $\mF$ splits into the spectral subspaces
\[
\mF_o^\kappa \, := \, \{ T \in \mF_o \, : \, \alpha_\zeta(T) = \zeta^\kappa T \, , \, \forall \zeta \in \bU(1) \} 
\ \ \ , \ \ \ 
\kappa \in \bZ \, ,
\]
with $\mF_o^0 = \mF_o \cap G'$.
Clearly $\psi(f) \in \mF_o^1$ for $\supp(f) \subseteq o$ and $\kappa$ is interpreted as the electric charge.
In the following lines we construct a set of DHR-sectors of $\mR$ labelled by $\kappa$.

Let $f \in \mS_o(DM)$ such that $\lb f , f \rb = 1$. 
Then writing the CARs (\ref{eq.B4}) for $f=f'$ we find that $\psi(f)^*\psi(f)$, $\psi(f)\psi(f)^* \in \mF_o^0$ are projections,
and by the ${\mathrm{III}}_1$ property there are $V_o , W_o \in \mF_o^0$ such that 
$\varphi_o := W_o^* \psi(f) V_o \in \mF_o^1$
is unitary. The normal commutation relations then imply
\begin{equation}
\label{eq.C7}
\varphi_o^* \varphi_e \, = \, - \varphi_e \varphi_o^*
\ \ \ , \ \ \ 
\varphi_o \varphi_e \, = \, \varphi_e \varphi_o
\ \ \ , \ \ \ 
\varphi_o T \, = \, T \varphi_o \, ,
\end{equation}
for all $o \perp e$ and $T \in \mF_e^0$.
We define the unitary operators
\begin{equation}
\label{eq.C5}
z^1_{o'o} \, := \, \pi^0(\varphi_{o'} \varphi_o^*) \ \ \ , \ \ \ o \subseteq o' \, .
\end{equation}
It is clear that $z^1_{o'o}$ belongs to $\mU(\mR_{o'})$ and fulfils (\ref{eq.C1});
if $p : e \to o$ is a path, then by (\ref{eq.C3}) and (\ref{eq.C5}) we have $z^1_p = \pi^0(\varphi_o\varphi_e^*)$.
Defining the charge
\begin{equation}
\label{eq.C6}
\pi^1_o(t) \, := \, \pi^0(\varphi_o T \varphi_o^*) \ \ \ , \ \ \ \forall o \, , \, t \in \mR_o \, ,
\end{equation}
and using (\ref{eq.C7}),
we find that (\ref{eq.C2}) and the properties of net structure preservation and localization are obviously fulfilled.
Thus $(z^1,\pi^1)$ is a sector.
Note that, even if we have not checked Haag duality, 
(\ref{eq.C4}) holds,
\[
z^1_p \, t \, (z^1_p)^* \ = \ 
\pi^0(\varphi_o \varphi_e^*) \, t \, \pi^0(\varphi_e \varphi_o^*) \ = \ 
\pi^0(\varphi_o T \varphi_o^*) \ = \ \pi^1_o(t)
\ \ \ , \ \ \ 
t \in \mR_o  \, , \, e \perp o \, ,
\]
having used again (\ref{eq.C7}). 
Thus $(z^1,\pi^1)$ is a DHR-sector carrying charge $1$;
since $\pi^1$ is implemented by the family of unitaries $\{ \varphi_o \}$ (and not by a family of multiplets of isometries as in \S \ref{sec.D}),
we can conclude that $(z^1,\pi^1)$ has topological dimension $1$.
Analogously, using powers $\psi(f)^n$, $(\psi(f)^*)^n$, $n \in \bN$, in place of $\psi(f)$,
one constructs sectors $(z^\kappa,\pi^\kappa)$, $\kappa \in \bZ$,
carrying charge $\kappa$ and having topological dimension $1$.

Now, we are interested in generic sectors $(z,\pi)$ having topological dimension $1$ and such that $\pi$ is equivalent
to the "fundamental charge" $\pi^1$, in the sense that for all $o$ there is a unitary $u_o \in \mU(\mR_o)$ with 
$\pi_o(t) = u_o \pi^1_o(t) u_o^*$, $t \in \mR_o$.
We denote the set of such sectors by ${\bf sect}^1(\mR)$.
\begin{thm}
\label{thm.C1}
Let $\mF$ be the field net of the free Dirac field and $\mR$ denote the observable net.
Then any $(z,\pi) \in {\bf sect}^1(\mR)$ defines a flat gauge potential $A^z \in \mS(T^*M)$ and a twisted field net $\mF^{\sigma^z}$.
On the converse, any twist $\mF^\sigma$ with flat gauge potential $A \in \mS(T^*M)$ defines  a sector $(z,\pi) \in {\bf sect}^1(\mR)$.
In particular, sectors $(z,\pi) \in {\bf sect}^1(\mR)$ such that $dA^z = 0$ 
are in one-to-one correspondence with interacting fields $\psi_A$, $dA = 0$.
\end{thm}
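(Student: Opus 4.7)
The plan is to reduce Theorem \ref{thm.C1} to constructing a bijection between sectors $(z,\pi)\in{\bf sect}^1(\mR)$ and morphisms $\sigma:\pi_1(M)\to\bU(1)$; once this is in hand, the Abelian theory of \S\ref{sec.A} attaches to any such $\sigma$ both a flat gauge potential $A\in\mS(T^*M)$ via (\ref{eq.A4ab}) and a twisted field net via (\ref{eq.C13}), and Proposition \ref{prop.C1} supplies the final one-to-one correspondence with interacting fields $\psi_A$ when $dA=0$.

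For the forward direction, given $(z,\pi)\in{\bf sect}^1(\mR)$, (\ref{eq.C11}) produces $\sigma^z:\pi_1(M)\to\bU(1)$ because the topological dimension is $1$; feeding $\sigma^z$ into (\ref{eq.A4ab}) yields $A^z\in\mS(T^*M)$, and (\ref{eq.C13}) defines $\mF^{\sigma^z}$. For the backward direction, I would deform the fundamental sector $(z^1,\pi^1)$ constructed from the unitaries $\varphi_o$ by the cocycle attached to $\sigma$: set
\[
\pi_o:=\pi^1_o\,,\qquad z_{o'o}:=g_{o'o}\,z^1_{o'o}\,,\qquad o\subseteq o'\,,
\]
where $g_{o'o}=\sigma(\ell(o'o;x))\in\bU(1)$ is the constant transition function of (\ref{eq.A2}), well defined because $o\cap o'=o$ is simply connected. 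The cocycle relation (\ref{eq.A3}), combined with the corresponding identity for $z^1$, gives (\ref{eq.C1}); (\ref{eq.C2}) holds because $g_{o'o}$ is a scalar and hence commutes with $\pi^1_o(t)$; and net structure preservation and localization are inherited from $(z^1,\pi^1)$ for the same reason. To see that the topological component of this sector equals $\sigma$, factor the scalars out of $z_{p(\ell)}$ and use (\ref{eq.B7}) together with (\ref{eq.A3}):
\[
z_{p(\ell)}\;=\;\sigma(\ell)\cdot z^1_{p(\ell)}\;=\;\sigma(\ell)\cdot\bI\,,
\]
the last equality because $(z^1,\pi^1)$ is a DHR sector, so its monodromy is trivial. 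The two constructions are therefore mutually inverse at the level of the morphism, and the final part of the theorem follows by combining this bijection with Proposition \ref{prop.C1}.

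The main obstacle is a careful verification of the backward direction: one must confirm in detail that the twisted charge transporter $g_{o'o}\,z^1_{o'o}$ still satisfies net structure preservation and localization and that the topological dimension remains $1$. A secondary, implicit point is that $A^z$ is determined by $\sigma^z$ only up to an exact $1$--form; accordingly, the asserted one-to-one correspondence has to be interpreted modulo the natural gauge equivalence of flat potentials and, via Proposition \ref{prop.C1}, of the corresponding Dirac fields $\psi_A$.
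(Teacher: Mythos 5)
Your proposal is correct and follows essentially the same route as the paper: the forward direction invokes the topological component (\ref{eq.C11}) and the flat-connection construction of \S\ref{sec.A}, and the backward direction deforms the fundamental sector $(z^1,\pi^1)$ by the scalar cocycle $g_{o'o}$, which is precisely the paper's $z^\sigma_{o'o}=g_{o'o}\,z^1_{o'o}$ of (\ref{eq.C16}). You merely spell out the verification that the paper compresses into ``a straightforward check'' (cocycle identity, (\ref{eq.C2}), localization, and the monodromy computation matching (\ref{eq.C17})), and your closing remark that the bijection holds only modulo gauge equivalence of flat potentials is consistent with the paper's post-proof comment that the two constructions are mutually inverse up to equivalence of sectors.
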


\begin{proof}
Let $(z,\pi) \in {\bf sect}^1(\mR)$. We already know that there is a topological component
$\sigma^z : \pi_1(M) \to \bU(1)$
and a flat gauge potential $A^z \in \mS(T^*M)$, thus we can proceed to the construction of the twist $\mF^{\sigma^z}$.
On the converse, given a twist $\mF^\sigma$, $\sigma : \pi_1(M) \to \bU(1)$, with flat gauge potential $A \in \mS(T^*M)$,
we define the pair $(z^\sigma,\pi^1)$ where, for all $o \subseteq o'$, 
\begin{equation}
\label{eq.C16}
z^\sigma_{o'o} \ := \ 
\pi^0(\varphi_{o'} \sigma_{o'o}(\varphi_o)^*) \ = \ 
g_{o'o} \, z^1_{o'o} \, ,
\end{equation}
see (\ref{eq.A2}), (\ref{eq.C13}), (\ref{eq.C5}) and (\ref{eq.C6}). 
A straightforward check shows that $(z^\sigma,\pi^1) \in {\bf sect}^1(\mR)$.
Finally, the last part of the theorem follows by Proposition \ref{prop.C1}.
\end{proof}

The constructions of the previous theorem are the inverse one of each other up to equivalence in the sense of \cite[\S 2]{BR08};
in particular, if $(z,\pi) \in {\bf sect}^1(\mR)$ then the sector $(z^\sigma,\pi^1)$ defined for $\sigma = \sigma^z$
is equivalent to $(z,\pi)$, as can be proved by using the notion of splitting \cite[\S 6.1]{BR08}.
Using (\ref{eq.B7}) and (\ref{eq.C16}), we deduce the topological component of $(z^\sigma,\pi^1)$:
\begin{equation}
\label{eq.C17}
z^\sigma_\ell \ = \ 
\cdots g_{o_{k+1} o_{k0}} \cdot g_{o_k o_{k0}}^{-1} \cdot g_{o_k o_{k1}} \cdots \, z^1_{aa} \ = \ 
\sigma(\ell) \bI \ = \
\exp \oint_\ell A \cdot \bI \, .
\end{equation}
To appreciate the role of the phase operators $z^\sigma_\ell$ in charged states, it is convenient to come back
to the physical Hilbert space $\mH$ and consider the operators
$\hat{z}_{o'o} :=$ $g_{o'o} \varphi_{o'} \varphi_o^* \in \mU(\mF_o^0)$, $o \subseteq o'$,
such that $\pi^0(\hat{z}_{o'o}) = z_{o'o}$. 
Given the reference vector $\omega \in \mH$
we define $v_o := \varphi_o \omega$, obtaining vectors with charge 1
interpretable as excitations of $\omega$ localized in $o$. Applying the operators
$\hat{z}_p$, $p : a \to o$,
defined as in (\ref{eq.C3}), we find 
\[
\hat{z}_p \, v_a \ = \ g_p \cdot \varphi_o \varphi_a^* \, \varphi_a \omega \ = \ g_p \cdot v_o
\ \ \ , \ \ \ 
g_p \in \bU(1)
\, .
\]
Thus $\hat{z}_p$ transports the state defined by $v_a$ into the one defined by $v_o$.
The transport is path-dependent, in fact given $q : a \to o$ we have the transition amplitude
\begin{equation}
\label{eq.C18}
\lb \hat{z}_q v_a \, , \, \hat{z}_p v_a \rb \ = \ 
\lb v_a \, , \, \hat{z}_{\ovl{q} \, *p} \, v_a \rb \ = \ 
\exp \oint_\ell A \, ,
\end{equation}
where $\ovl{q} : o \to a$ is the opposite of $q$, $\ovl{q}*p$ is the path composition \cite{RV11},
and the loop $\ell$ is defined such that $p(\ell) = \ovl{q}*p$.

%%%%%%%%%%%%%%%%%%%%%%%%%%%%%%%%%%%%%%%%%%%

\section{Non-abelian phases}
\label{sec.D}

Non-Abelian Aharonov-Bohm phases have been considered by many authors (see \cite{Hov86} and related references) and,
as we saw in the previous section, naturally appear in the context of sectors
in the guise of the topological component (\ref{eq.C11}).
A complete discussion of non-Abelian phases is beyond the scope of the present paper, 
nevertheless it is instructive to give a sketch of the way they give rise to superselection sectors.

Our starting point is a field net $\mF$ with compact gauge group $G$. To fix ideas, we may assume that $\mF$ is generated by a
generalized Dirac field
\[
\psi : \mS_c(DM \otimes \bC^d) \to \mB(\mH) \, ,
\]
with $G \subseteq \bU(d)$ appearing in an irreducible fundamental representation. 
Using the analysis of spectral subspaces \cite{DR90} and the ${\mathrm{III}}_1$ property, 
one can prove that for any finite-dimensional unitary representation $\rho : G \to \bU(n)$ (concisely: $\rho \in \hat{G}$),
and for any diamond $o$, there is a multiplet of isometries $\{ \varphi_{o,i} \in \mF_o \}_{i=1,\ldots,n}$ such that
\begin{equation}
\label{eq.E1}
\varphi_{o,i}^* \varphi_{o,j} \, = \, \delta_{ij} \, \bI
\ \ , \ \ 
\sum_i^n \varphi_{o,i} \varphi_{o,i}^* = \bI
\ \ , \ \ 
\alpha_g(\varphi_{o,i}) \, = \, \sum_j^n \rho(g)_{ij} \, \varphi_{o,j}
\, .
\end{equation}
%
%
%\[
%\varphi_o \, := \, {\mathrm{span}} \, \{ \varphi_{o,i} \in \mF_o^\rho \, : \, (\ref{eq.E1}) \ {\mathrm{holds}} \} 
%\]
%
%
Therefore, at varying of the diamond $o$, we define the charge
\begin{equation}
\label{eq.E2}
\pi^\rho_o(t) \, := \, \pi^0 \left( \sum_i^n \varphi_{o,i} T \varphi_{o,i}^* \right)
\ \ \ , \ \ \ 
t \in \mR_o
\, .
\end{equation}
To construct a charge transporter for $\pi^\rho$, we consider
a twist $\sigma : \pi_1(M) \to G$ and the associated twisted field net $\mF^\sigma$.
Given our $\rho \in \hat{G}$, we have that $\rho(G)$ is a compact Lie group with Lie algebra $\efg_\rho$,
and performing the composition $\rho \circ \sigma$ defines the principal $\rho(G)$-bundle $P_{\rho\sigma}$ and
the $\efg_\rho$-bundle $\efg_{\rho\sigma} := P_{\rho\sigma} \times_\ad \efg_\rho$.
By the results in \S \ref{sec.A}, we have the flat gauge potential $A^{\rho\sigma} \in \mS(T^*M \otimes \efg_{\rho\sigma})$ such that
\begin{equation}
\label{eq.E3}
\rho(\sigma(\ell)) \ = \ \xP\exp \oint_\ell A^{\rho\sigma} \, \in \rho(G) \subset \bU(n)
\ \ \ , \ \ \ 
\forall [\ell] \in \pi_1(M) \, .
\end{equation}
On these grounds, generalizing (\ref{eq.C16}) we define the unitaries
\begin{equation}
\label{eq.E4}
z^{\rho\sigma}_{o'o} \ := \
\pi^0 \left( \sum_i^n  \varphi_{o',i} \, \sigma_{o'o}(\varphi_{o,i})^* \right) 
\ \ \ , \ \ \ 
o \subseteq o'
\, .
\end{equation}
Using (\ref{eq.E1}) it can be verified that $(z^{\rho\sigma},\pi^\rho)$ is a sector with 
topological component (\ref{eq.E3}) (the calculation is analogous to (\ref{eq.C17})).
In this way we have a family of sectors of $\mR$ labeled by $\rho$ (alike classical DHR-sectors)
and $\sigma$, this last encoding the "Wilson loops" (\ref{eq.E3}).
Thus we assembled our sectors in such a way that the topological component takes values 
in a representation of the gauge group $G$, and our working hypothesis is that this is true for any sector $(z,\pi)$:
to advocate this point, we recall Theorem \ref{thm.C1} and the subsequent computation (\ref{eq.C17}),
where it is shown the equivalence between flat gauge potentials and topological components for $G \simeq \bU(1)$.
The proof that any $(z,\pi)$ is of this type would require a reconstruction theorem of the type \cite{DR90}
which is, at the present time, lacking.

%%%%%%%%%%%%%%%%%%%%%%%%%%%%%%%%%%%%%%%%%%%

\section{Conclusions}
\label{sec.E}

It is an established result that DHR superselection sectors of an observable net label quantum charges
corresponding to representations of the global gauge group \cite{DR90}. 
For example, in the present paper we exhibited a family of DHR sectors of the observable net of the free Dirac field,
labelling the electric charge.

When the spacetime has a non-trivial fundamental group a further class of superselection
sectors is available (topological sectors, \cite{BR08}), and we proved that in the case
of the free Dirac field these sectors are interpreted in
terms of interactions with background flat electromagnetic potentials.
The corresponding topological components (\ref{eq.C11}) are then interpreted as Aharonov-Bohm phases
affecting the parallel transport of charges, and yield the expected value (\ref{eq.C17}),( \ref{eq.C18}).
Note that the potential is reconstructed by the phase, \S \ref{sec.A};
this is satisfactory from the physical point of view,
because it is the phase that actually appears in experiments.

These results imply that the observable net encodes its own background flat interactions.
For example, with the notation of \S \ref{sec.C}, the DHR sector $(z^1,\pi^1)$ encodes the "charge of the electron",
whilst $(z^\sigma,\pi^1)$ is associated to the electron interacting with the potential $A$ defined by $\sigma$.
Thus the presence of the flat background potential is detected at the level of the observable algebra
in terms of the non-trivial parallel transport operators $z^\sigma_\ell$,
whilst for the DHR-sector we have $z^1_\ell \equiv \bI$.

\medskip

We remark that the Aharonov-Bohm effect has been discussed as an aspect of the
quantum electromagnetic field in the setting of locally covariant quantum field theory \cite{SDH15}.
There, a peculiar property of the family of algebras assigned to spacetime manifolds is that the "inclusion" morphisms
assigned to spacetime embeddings may not be injective,
as a consequence of the fact that Aharonov-Bohm phases may disappear in bigger, \emph{simply connected} spacetimes.
In the approach of the present paper this eventuality is avoided, because we have a fixed background spacetime $M$,
so that loop observables of the type $z_\ell$ are non-trivial whenever the homotopy class of $\ell$ is non-trivial in the topology of $M$.
Moreover, diamonds are simply connected, thus no topological effect related to the fundamental group may appear in the local algebras $\mR_o$
{\footnote{
Anyway, this point deserves a deeper discussion, in particular on the issue of locality,
which is postponed to the above-mentioned work in progress by Dappiaggi, Ruzzi and the author.}}.

\medskip

In a non-relativistic setting,
we note that our results qualitatively fit those found by Morchio and Strocchi for finitely many degrees of freedom \cite{MS07}:
on a generic manifold $M$, uniqueness of the representation of the Heisenberg relations breaks down, 
and representations of the Weyl algebra are labelled by Hilbert space representations $\sigma$ of $\pi_1(M)$.
Yet in the above-cited reference no description of $\sigma$ is given in terms of flat background potentials,
and this may be an interesting point to investigate
{\footnote{
As a matter of fact, such an interpretation would fit results proven in the setting of quantization 
\emph{via} path integrals \cite{Hov80}.
The author would like to thank P.A. Horv\'athy for this reference.}}.

%Returning to the classical Aharonov-Bohm effect, the fact that the interaction is encoded in the state
%can be interpreted as the fact that the activation of the magnetic field in the shielded solenoid is regarded as part
%of the preparation of the state.

\medskip

A point that has not explicitly discussed in this paper is the eventuality that $\pi_1(M)$ is non-Abelian.
This is not an exotic scenario since, for example, adding a second shielded (parallel) solenoid to the Aharonov-Bohm apparatus
yields $\pi_1(N) \simeq \bF_2$, the free group with two generators.
$\pi_1(M)$ being non-Abelian, representations $\sigma : \pi_1(M) \to \bU(n)$, $n > 1$, 
appear in such a way that $\sigma$ takes values in a possibly non-Abelian subgroup $G \subset \bU(n)$.
Thus we fall in the case sketched in \S \ref{sec.D}, that shall be discussed in future publications.

\medskip

Finally we mention that non-flat background potentials can be codified as well in the observable net of the free Dirac field.
In this case we lose homotopic invariance of the phase $\sigma$ and, instead of charge transporters fulfilling (\ref{eq.C1}),
we get more in general connections as in \cite{CRV1,CRV2}.
This scenario may be analysed already in Minkowski spacetime, by considering the net of the free Dirac field
and codifying the interaction with a potential $A \in \mS(T^*\bR^4)$, $dA \neq 0$, by means of a connection $z$
carrying the holonomy $\sigma^z(\ell) = \exp \oint_\ell A$.

%\begin{itemize}
%\item At the level of the physical interpretation our description of the Aharonov-Bohm effect 
%      is based on observable quantites and relies on the notion of state, whose preparation
%      is a fundamental point. In this regard, we note that both the arrangements of
%      %
%      \begin{enumerate}
%      \item the device inducing the magnetic field (and, as a consequence, the flat potential),
%      \item the shield determining the geometry of the space where the particles are confined,
%      \end{enumerate}
%      %
%      are part of the preparation (thus they are part of the definition of the state)
%      and both necessary to get the phase shift.
%      We adopt this point of view for the relativistic effect by using the notion of 
%      topological sector, corresponding to a pure state, affected by the spacetime topology,
%      of the observable net.
%      %
%\item Two solenoids $\pi_1(M) = \bF_2$.
%      %
%      % https://journals.aps.org/prd/abstract/10.1103/PhysRevD.33.407 non-abelian AB effect, 
%      % scaricato da genesis
%      %
%%\item The role of twisted field nets.
%      %
%%\item \cite{MS07}
%      %
%\item ...
%\end{itemize}

\paragraph{Acknowledgement.}
The author is supported in part by OPAL "Consolidate the Foundations".
The author would also like to thank L. Giorgetti for useful remarks.

%%%%%%%%%%%%%%%%%%%%%%%%%%%%%%%%%%%%%%%%%%%%%%%%%%%%%%%%%%%%%%%%%%%%%%%%%%%%%%%%%%%%%%%

{\small

}

\end{document}